\def\cA{{\mathcal{A}}}  \def\cC{{\mathcal{C}}} 
 \def\cF{{\mathcal{F}}}  
 \def\cN{{\mathcal{N}}}  
  \def\cW{{\mathcal{W}}}
\def\ba{{\mathbf{a}}}    
\def\bff{{\mathbf{f}}}    
   \def\bn{{\mathbf{n}}} 
  \def\br{{\mathbf{r}}} \def\bs{{\mathbf{s}}} 
    \def\by{{\mathbf{y}}}
\def\bA{{\mathbf{A}}} \def\bB{{\mathbf{B}}}   
\def\bF{{\mathbf{F}}}  \def\bH{{\mathbf{H}}} \def\bI{{\mathbf{I}}} 
\def\bP{{\mathbf{P}}}    
\def\bU{{\mathbf{U}}} \def\bV{{\mathbf{V}}} \def\bW{{\mathbf{W}}}
\def\argmin{\mathop{\mathrm{argmin}}}
\def\argmax{\mathop{\mathrm{argmax}}}
\def\tr{\mathop{\mathrm{tr}}}
\def\rank{\mathop{\mathrm{rank}}}
\def\vec{\mathop{\mathrm{vec}}}
\def\diag{\mathop{\mathrm{diag}}}
\def\bzeros{\mathbf{0}}
     \def\d4{\!\!\!\!}              
                       \def\al{\alpha}
   \def\bLam{\mathbf{\Lambda}}
    \def\lam{\lambda}
  \def\R{{\mathbb{R}}} \def\C{{\mathbb{C}}}
           \def\lA{\left\|}     \def\rA{\right\|}
  \def\sig{\sigma}
  \def\-{\! - \!}  \def\+{\! + \!}  \def\={\! = \!}  \def\>{\! > \!} \def\nn{\nonumber}
\newtheorem{lemma}{Lemma}
\newcommand{\bef}{\begin{figure}}
\newcommand{\eef}{\end{figure}}
\newcommand{\beq}{\begin{eqnarray}}
\newcommand{\eeq}{\end{eqnarray}}
\newenvironment{proof}[1][Proof]{\begin{trivlist}
\item[\hskip \labelsep {\bfseries #1}]}{\end{trivlist}}
\newcommand{\qed}{\nobreak \ifvmode \relax \else
\ifdim\lastskip<1.5em \hskip-\lastskip \hskip1.5em plus0em
minus0.5em \fi \nobreak \vrule height0.5em width0.5em
depth0.25em\fi}
\begin{document}

\title{Sparse Subspace Decomposition for Millimeter Wave MIMO Channel Estimation}
\author{Wei~Zhang$^{\dag}$, Taejoon~Kim$^{\dag}$ and David J. Love$^{\dag\dag}$ \\
\IEEEauthorblockA{ $^{\dag}$\small Department of Electronic Engineering and State Key Laboratory of Millimeter Waves, City University of Hong Kong, Kowloon \\
            \small $^{\dag\dag}$School of Electrical and Computer Engineering, Purdue University, West Lafayette, IN, US \\
            \small Email: wzhang237-c@my.cityu.edu.hk$^{\dag}$, taejokim@cityu.edu.hk$^{\dag}$, djlove@ecn.purdue.edu$^{\dag\dag}$\\}
            \thanks {The work of W. Zhang and T. Kim was fully supported by the Research Grant Council, Hong Kong under Project No. CityU 21200714.} }
\maketitle
\IEEEpeerreviewmaketitle
\begin{spacing}{1.5}
\begin{abstract}
  Millimeter wave multiple-input multiple-output (MIMO) communication systems   must operate over sparse wireless links and will require  large antenna arrays to provide high throughput.  To achieve sufficient array  gains, these systems must learn and adapt to the channel state conditions.
  However,  conventional MIMO channel estimation can not be directly extended to millimeter wave due to the constraints on cost-effective millimeter wave operation imposed on the number of available RF chains.
  Sparse subspace scanning techniques that search for the best subspace sample from the sounded subspace samples have been investigated for channel estimation. 
 However, the performance of these techniques starts to deteriorate as the array size grows, especially for the hybrid precoding architecture.
  The millimeter wave channel estimation challenge still remains and should be properly addressed before the system can be deployed and used to its full potential. In this work, we propose a sparse subspace decomposition (SSD) technique for sparse millimeter wave MIMO channel estimation. We formulate the channel estimation as an optimization problem that minimizes the subspace distance from the received subspace samples. Alternating optimization techniques are devised to tractably handle the non-convex problem. Numerical simulations demonstrate that the proposed method outperforms other existing techniques with remarkably low overhead.
\end{abstract}

\section{Introduction} \label{section1}
The operating frequency of modern wireless systems is steadily shifting to the millimeter wave (e.g.,\! 28-100 GHz) bands which can provide a much wider bandwidth.
These higher frequencies will force systems to use large arrays to generate narrow beams in order to overcome the pathloss and atmospheric impairments encountered at these frequencies. There is particular interest in using advanced multiple antenna concepts at millimeter wave frequencies, but these techniques cannot be directly applied due to the prohibitively complex baseband signal processing overhead\! \cite{RobertOver}.

To alleviate the increased overhead,  hybrid precoding architectures \cite{molisch2004mimo,venka2010analog} have been explored for use at millimeter wave frequencies \cite{nsenga2010mixed, spatially}.
In a hybrid architecture, a large antenna-array is driven by a limited number of RF chains mapped to the antennas using analog processing, i.e., a phase shifter network that linearly processes the RF signal.
Digital processing of the small number of RF chains is then possible.  This architecture, however, does not allow   the digital baseband  to directly access the entire channel, since it only processes analog pre-processed samples.  This  causes a subspace sampling limitation problem \cite{hur}, which  limits the  hybrid system's ability to track and adapt to the channel.

To address the subspace sampling limitation, earlier work mainly focused on hierarchical beam scanning techniques \cite{hur,alk,Wang09}. The transmitter and receiver iteratively search for the best subspace pair by scanning all the subspace pairs chosen in the hierarchical subspace sounding codebooks \cite{hur,alk}. However, the performances of these techniques will be degraded as the antenna size grows, especially when combined with hybrid precoding, due to the increased search overhead.

There have been multiple channel measurement campaigns demonstrating that practical millimeter wave channels present substantially sparse scattering \cite{rappaport,brady}. In this kind of environment, the size of the useful subspace is considerably smaller than the channel dimension. Hence, channel estimation should be adapted  to the sparse channel subspace in order to address the subspace sampling limitation problem.
Recently,  subspace estimation using the Arnoldi method \cite{hadi2015} and the support recovery leveraging the virtual channel representation \cite{kim2015virtual} have been proposed.
With the exception of the work in \cite{hadi2015, kim2015virtual}, little is known on how to efficiently adapt  channel estimation to the sparse but large-dimensional millimeter wave multiple-input multiple-output (MIMO) channel.

In this paper, we formulate  millimeter wave MIMO channel estimation  as a subspace distance minimization problem.
This problem is  related to the general rank minimization problem \cite{semi}, which is NP-hard.
The rank minimization  problem \cite{semi} can be relaxed by using  nuclear norm minimization (NNM)  \cite{recht,chen2015exact}.
Though NNM provides robust performance, it does not guarantee a low rank solution and does not scale to large-dimensional applications due to  computational issues. A matrix factorization (MF) algorithm is proposed in  \cite{Haldar_factor},
which is faster than NNM, to ensure a low rank solution.
The general challenge of the techniques in \cite{recht,chen2015exact,Haldar_factor} is that when noisy observations are made, the reconstruction is often matched to the noise, causing the critical \emph{over-matching} problem.

In this paper, we propose a sparse subspace decomposition technique for the large-dimensional millimeter wave MIMO channel estimation.
The devised technique is general enough to allow channel estimation in  different classes of sparse channels.
We design an alternating optimization technique that tractably handles the formulated sparse decomposition problem.
The algorithm has low complexity and converges to a stationary point.
Moreover, based on intuition from the strong law of large numbers, we develop a simple method to avoid the \emph{over-matching} issue arising in the low signal-to-noise ratio (SNR) regime.
The proposed algorithm outperforms the conventional methods \cite{chen2015exact,Haldar_factor} with substantially reduced computational overhead.


\subsubsection*{Notations}
A bold lower case letter $\ba$ is a vector, a bold capital letter $\bA$ is a matrix. $[\bA]_{i,j}$ is the $i$th row and $j$th column entry of $\bA$, and $[\bA]_i$ is the $i$th column of $\bA$.
$\bA^*,\bA^T,\bA^{\!-1}$, tr($\bA$), $\| \bA \|_F$, $\| \ba \|_2$, $\| \bA \|_*$, respectively, are the
conjuagate transpose, transpose, inverse, trace of $\bA$, Frobenius norm of $\bA$, $\l_2$-norm of $\ba$, and the nuclear norm of $\bA$, which is the sum of its singular values.
$\bI_M \! \in \! \R^{M\times M}$ is the identity matrix.
$\bA  \otimes \bB$ is the Kronecker product of $\bA$ and $\bB$.
Let $\vec(\bA)$ be the operator that stacks columns of $\bA$ into a column vector.
Let $\diag(\bA)$ be the operator that collects the diagonal elements of a square matrix $\bA$ and {forms} a column vector.

\section{System Model} \label{section2}
In this section, we provide signal model, motivations, and general statements of the channel estimation problem in the millimeter wave MIMO systems.
\subsection{Signal Model}
Consider a point-to-point MIMO hybrid precoding system with $N_\mathrm{t}$ transmit and $N_\mathrm{r}$ receive antennas, where each side is equipped with $N_\mathrm{RF}$ RF chains.
The number of data streams is $N_ \mathrm{d}$, and we assume $N_\mathrm{d} \! \leq \! N_\mathrm{RF} \! \leq  \! \min(N_\mathrm{r}, N_\mathrm{t})$.
The transmitter employs the analog precoder $\bF_\mathrm{A} \! \in \! \C^{N_\mathrm{t} \times N_\mathrm{RF}}$ and digital precoder $\bF_\mathrm{D} \! \in \! \C^{N_\mathrm{RF} \times N_\mathrm{d}}$.
At the receiver, $\bW_\mathrm{D} \! \in \!  \C^{N_\text{RF} \times N_\mathrm{d}}$ and $\bW_\mathrm{A} \! \in \!  \C^{N_\mathrm{r} \times N_\mathrm{RF}}$, respectively, denote the digital and analog combiners.
The entries of $\bF_\mathrm{A}$ and $\bW_\mathrm{A}$ are constrained such that $| [\bF_\mathrm{A}]_{i,j} | \!\=\! \frac{1}{\sqrt{N_\mathrm{t}}}$ and $|[\bW_\mathrm{A}]_{i,j} | \!\=\! \frac{1}{\sqrt{N_\mathrm{r}}}$, $\forall i, j$, which are imposed due to analog processing.
The channel input and output relation is therefore given by
\vspace{-0.1cm}
\beq
\textstyle
\br = \bW^*_\mathrm{D} \bW^*_\mathrm{A} \bH \bF_\mathrm{A} \bF_\mathrm{D} \bs + \bW^*_\mathrm{D} \bW^*_\mathrm{A} \bn, \label{receive signal}
\eeq
where $\bH \in \C^{N_\mathrm{r} \times N_\mathrm{t}}$ is the sparse millimeter wave MIMO channel with $\rank(\bH)=L$. We assume $\frac{L}{\min(N_\mathrm{r}, N_\mathrm{t})}\approx 0$.
Notice that, in general, $L$ is not known a priori. Instead, we assume that a rank upper bound of $\bH$, which we define as $d$  (i.e., $\rank({\bH}) \! \leq \! d$), is known a priori.
The vector $\bn\in\C^{N_\mathrm{r} \times 1}$ in \eqref{receive signal} is the additive noise with each entry independent and identically distributed (i.i.d.) as $\bn\!\sim\! \cC\cN(\bzeros, \sig^2 \bI_{N_\mathrm{r}})$.
The vector $\bs\in\C^{N_\mathrm{d} \times 1}$ is the transmit signal satisfying $\mathbb{E} [ \lA \bF_\mathrm{A}\bF_\mathrm{D} \bs \rA_2^2 ] \= 1$. The SNR is, thus, $1/\sig^2$.


Extracting a high quality channel estimate is the key to facilitating advanced MIMO precoding techniques. However, conventional MIMO channel estimation does not directly extend to a millimeter wave hybrid precoding architecture since the digital baseband only accesses the analog compressed channel $\bW_\mathrm{A}^* \bH \bF_\mathrm{A} \! \in \! \C^{N_\mathrm{RF} \times N_\mathrm{RF}}$, rather than the entire $\bH$ as shown in \eqref{receive signal}.
This limitation is the major hurdle when estimating the large-dimensional millimeter wave MIMO channel.
Without direct access to $\bH$, the channel sounding problem is converted to a subspace sampling problem.

\subsection{Subspace Scanning}
CSI acquisition techniques using subspace sampling were investigated in \cite{hur, alk, Wang09}.
Provided a subspace pair $(\bW_{k}, \bF_{k})$, where $\bW_{k}\!\in\!\C^{N_\mathrm{r} \times N_\mathrm{d}}$ and $\bF_{k}\! \in\! \C^{N_\mathrm{t} \times N_\mathrm{d}}$, the subspace sample  $\by_k \! \!\in \! \C^{N_\mathrm{d} \times 1}$\! at the $k$th channel use is
\beq
\by_k = \bW_k^* \bH \bF_k\bs_k + \bW_k^* \bn_k, \ k=1, \ldots, K,  \label{channel sounding}
\eeq
where $\bs_k \in \C^{N_\mathrm{d} \times 1}$ is typically an all one vector with proper normalization.
Each subspace pair $(\bW_{k}, \bF_{k})$ is chosen from a pre-designed subspace sampling codebook \cite{hur, alk, Wang09}.
In particular, if both the digital and analog parts are utilized, we have $\bF_{k} \= \bF_{\mathrm{A},k} \bF_{\mathrm{D},k}$ and $\bW_{k} \= \bW_{\mathrm{A},k} \bW_{\mathrm{D},k}$, $k=1, \ldots, K$.

After collecting $K$ subspace samples, the best pair $(\bW_{k_{opt}}, \bF_{k_{opt}})$  is selected by maximizing the received power,
\vspace{-0.3cm}
\beq
k_{opt} = \argmax_{k=1,\ldots, K} \lA \by_k \rA_2^2.   \label{max power criterion}
\eeq
\vspace{-0.3cm}

\noindent
The best pair $(\bW_{k_{opt}}, \bF_{k_{opt}})$ can be directly used or further processed to generate the precoder and combiner for data transmission. This framework is {investigated in} \cite{hur, Wang09} for beamforming and \cite{alk} for precoding.

\subsection{Motivations and General Statement of Technique}
Obviously, the decesion in \eqref{max power criterion} is highly susceptible to noise.
When the sampling SNR is low, the noise realizations can lead to incorrect subspace decisions.
This is because the subspace scanning approach only looks at the largest power of $\{\by_k\}_{k=1}^K$ rather than correlating the observation with the aligned subspace pairs.
Since the millimeter wave channel is sparse \cite{rappaport,brady}, advanced approaches must also leverage this fact to enhance the channel estimation.
{Our proposed approach will therefore take sparse subspace information into account by
taking the subspace correlation with the noisy observations.}

To this end, we slightly modify the subspace sampling framework in \eqref{channel sounding} to make full use of $N_\mathrm{RF}$ subspace dimension at baseband.
At the $k$th channel use, we have
\beq
\textstyle
\by_k = \bW^*_k \bH \bff_k + \bW^*_k \bn_k,~k=1,2,\ldots, K, \label{new sampling}
\eeq
where $\by_k \! \in \! \C^{N_\mathrm{RF} \times 1}$, $\bW_k \! \in \! \C^{N_\mathrm{r} \times N_\mathrm{RF}}$, and $\bff_k \! \in \! \C^{N_\mathrm{t} \times 1}$.
Unlike \eqref{channel sounding}, the receiver in \eqref{new sampling} collects $N_\mathrm{RF}$ subspace samples per channel use using the subspace pair {$(\bW_k, \bff_k)$} where $\bW_{k} \= \bW_{\mathrm{A},k} \bW_{\mathrm{D},k}$ and {$\bff_{k} \= \bF_{\mathrm{A},k} \bF_{\mathrm{D},k} \bs_k$} with $\bW_{\mathrm{D},k} \! \in \! \C^{N_\mathrm{RF} \! \times \! N_\mathrm{RF}}$ and $\bF_{\mathrm{D},k} \! \in \! \C^{N_\mathrm{RF} \! \times \!  N_\mathrm{RF}}$.
After collecting $K$ subspace samples, we form $\by = [\by_1^T,\by_2^T,\ldots,\by_K^T]^T \! \in   \C^{(K \! N_\mathrm{RF}) \times 1}$.



\subsubsection*{Channel Use Overhead $K$}
The subspace sampling in (4) must be done within a channel coherence block $T$.
Typically, we assume the number of channel uses $K$ such that $K \! \ll \! T$.
Let the minimum required channel uses be $K_{\min}$.
Assuming the noiseless scenario, $K_{\min}$ that guarantees the perfect reconstruction of $\bH \! \in \! \C^{N_\mathrm{r} \times N_\mathrm{t}}$ with $\rank(\bH) \! \leq \! d$ is given by  $K_{\min} \! = \! d(N_\mathrm{r} + N_\mathrm{t} - d)$ [15].
Since the sampling in \eqref{new sampling} can collect in total $K \! N_\mathrm{RF}$ samples, we have the minimum channel uses $K_{\min} \! = \! \frac{d(N_\mathrm{r} + N_\mathrm{t} - d)}{N_\mathrm{RF}}$.
Now, considering the noise, the channel use overhead should satisfy
\beq
K > {d(N_\mathrm{r}+N_\mathrm{t}-d)} / {N_\mathrm{RF}}. \label{minimal samples}
\eeq

\section{Sparse Reconstruction {Objective} Functions} \label{section3}
In this section, we formulate the sparse subspace estimation problem as an optimization problem under the sparse channel constraints.
This problem will be solved in the next section.

First off, suppose the precoding and combining gain maximization problem for  $\bH$ with $\rank(\bH) \leq d$, i.e.,
\beq
\!\! \big( \widehat{\bW}, \widehat{\bF} \big) \!\=  \argmax\limits _{\bW, \bF} \! \lA  \bW^* \bH \bF  \rA_F^2 \ \!  \text{s.t.} \ \bW^* \bW  \! \= \bI_{N_\mathrm{d}},  \bF^* \bF \!\= \bI_{N_\mathrm{d}}, \!\!\!\! \label{precoding gain}
\eeq
where $\bW \in \C^{N_\mathrm{r} \times N_\mathrm{d}}$ and $\bF \in \C^{N_\mathrm{t} \times N_\mathrm{d}}$.
The \eqref{precoding gain} can be written in an alternative form given by
\beq
\begin{aligned}
\big( \widehat{\bU},\widehat{\bV} \big) =&\argmin_{\bU, \bV} \min_{\bLam} \lA \bH - \bU \bLam \bV^* \rA_F^2 \\
&\text{~subject to}~~~\bU^* \bU = \bI_d,~\bV^* \bV = \bI_d,
\label{eq_decom1}
\end{aligned} \label{channel recons}
\eeq
where $\bU \! \in \! \C^{N_\mathrm{r} \times d}$ and $\bV \! \in \! \C^{N_\mathrm{t} \times d}$. The $\bLam \! \in \! \C^{d \times d}$ is a diagonal matrix with the diagonal elements being sorted in  descending order. The relation between \eqref{precoding gain} and \eqref{channel recons} is that $[\widehat{\bW}]_i \!= \! [\widehat{\bU}]_i$ and $[\widehat{\bF}]_i\!=\![\widehat{\bV}]_i$, $i\!=\!1,2,\cdots,N_\mathrm{d}$.
The formulation in \eqref{channel recons} is important because it shows that maximizing the precoding gain in \eqref{precoding gain} is equivalent to approximating the low rank matrix $\bH$ with  $\bU \bLam \bV^*$.
One challenge is how to efficiently use the $K$ subspace samples in \eqref{new sampling} to make a correct approximation.

Inspired by \eqref{eq_decom1}, we now formulate our objective function under the noisy observation that minimizes the subspace distance from the received $K$ subspace samples in \eqref{new sampling}.
Mathematically,
\beq
\begin{aligned}
\big( \widehat{\bU},\widehat{\bV},\widehat{\bLam}\big) = &\argmin_{\bU,\bV,\bLam} \sum_{k = 1}^K \lA \by_k - \bW_k^* \bU \bLam \bV^*\bff_k  \rA^2_2 \\
&\text{~subject to}~~~\bU^* \bU = \bI_d,~\bV^* \bV = \bI_d. \label{objective_tmp}
\end{aligned}
\eeq
The channel estimate will be  $\widehat{\bH} =\widehat{\bU} \widehat{\bLam} \widehat{\bV}^*$.
However, dealing with the summation in \eqref{objective_tmp} is not convenient.
To make the problem intuitive, we introduce an affine map $\cA:  \cF  \times \cW  \times   \C^{N_\mathrm{r} \times N_\mathrm{t}} \mapsto \C^{K \!N_\mathrm{RF}\times 1}$, where $\cF = \{ \bff_k\}_{k=1}^K$ and $\cW = \{ \bW_k\}_{k=1}^K$.
Then, $\cA(\cF,\! \cW,\! \bH)$ yields
\beq
\textstyle
\cA(\cF,\! \cW,\! \bH) \= \!
\left[ (\bW_1^* \bH \bff_1\!)^T \!,  \ldots , (\bW_K^* \bH \bff_K\!)^T   \right]^T \! \!\! \in  \! \C^{K\! N_\mathrm{RF}  \times  1}. \!\!\!\! \label{affine map}
\eeq
The problem in \eqref{objective_tmp} is now rewritten by
\beq
\begin{aligned}
\big( \widehat{\bU},\widehat{\bV},\widehat{\bLam} \big) = &\argmin_{\bU,\bV,\bLam} \lA \by  -
\cA(\cF, \cW, \bU\bLam\bV^*)   \rA^2_2  \\
&\text{~subject to}~~~\bU^* \bU = \bI_d,~\bV^* \bV = \bI_d. \label{eq_objective}
\end{aligned}
\eeq


The problem in \eqref{eq_objective} has much overlap with the rank minimzation problems \cite{semi,recht,chen2015exact,Haldar_factor}.
The original rank minimization problem \cite{semi} is, however, non-convex and is infeasible to be directly solved (i.e., NP-hard). One alternative approach is to use the nuclear norm heuristic \cite{recht}. The nuclear norm minimization (NNM) problem leveraging the millimeter wave channel subspace samples in \eqref{new sampling} can be written as
\beq
\begin{aligned}
\widehat{\bH} = \argmin  \limits_ {\bH}  \lA \bH \rA_* ~~\text{subject to~~}  \by \!-\!  \cA(\cF, \cW,\bH) \in \cC. \label{nuclear norm} \nonumber
\end{aligned}
\eeq
The set $\cC$ is a convex set which can be adjusted based on the noise power \cite{chen2015exact}.
This problem is convex and solvable.

Though NNM provides robust performance, it does not guarantee the low rank solution, leading to inaccurate channel estimate under the noisy observation.
The most significant drawback is that the computational complexity becomes prohibitively high as $N_\mathrm{r}$ and $N_\mathrm{t}$ grow.

Much like our setting, the matrix factorization (MF) technique \cite{Haldar_factor} is proposed to guarantee the low rank solution and can be formulated as
\beq
\begin{aligned}
\big( \widehat{\bU}, \widehat{\bV} \big) = \argmin \limits_ {\bU,\bV} \lA \by - \cA(\cF,\cW, \bU \bV) \rA_2^2,  \label{matrix factor}
\end{aligned}
\eeq
where $\bU \! \in \! \C^{N_\mathrm{r} \times d}$ and $\bV \! \in \! \C^{N_\mathrm{t} \times d}$.
An alternating optimization can be used to find a local minimum of \eqref{matrix factor}. The computation complexity is lower than NNM.
However, since $\bU$ and $\bV$ are not constrained to lie on the unitary subspace, when it is compared with \eqref{eq_objective} and NNM,
it exhibits the worst performance (as shown in Section \ref{section5}).
%

\section{Sparse Millimeter Wave Channel Estimation} \label{section4}
Our method will address two major challenges: low-rank guarantee and reduced complexity with robust performance.
\subsection{Sparse Subspace Decomposition (SSD)} \label{section4A}

The objective function in \eqref{eq_objective} finds the {column and row} subspaces of $\bH$, i.e., $\bU$ and $\bV$, respectively, along with the power allocation matrix $\bLam$ to meet $\bU^* \bU = \bI_{d}$ and $\bV^* \bV = \bI_{d}$.
Unfortunately, these semi-unitary constraints are not tractable to handle since they are non-convex.
Though \emph{Cayley Transform} \cite{Zaiwen} can be employed to deal with the orthogonality constraints, the computational complexity is prohibitively high.
Instead, we will consider the convex relaxation of \eqref{eq_objective} as
\beq
\begin{aligned}
\big( \widehat{\bU},\widehat{\bV},\widehat{\bLam} \big) = &\argmin \limits_ {\bU,\bLam,\bV} \lA \by - \cA(\cF,\cW, \bU \bLam \bV^*) \rA_2^2
\\
&\text{subject to~~} \tr(\bU^* \bU)  \le d,~\tr(\bV^* \bV )  \le d.
\label{SVD method_relax}
\end{aligned}
\eeq

Notice that the problem \eqref{SVD method_relax} is still non-convex since those optimization parameters are coupled each other.
Nevertheless, the problem \eqref{SVD method_relax} can be suboptimally but tractably solved by using alternating minimization techniques.
In particular, optimizing one parameter by fixing the other two parameters in \eqref{SVD method_relax} is convex.
We can iteratively optimize $\bU$, $\bV$, and $\bLam$ by solving the following {three subproblems}:
\begin{itemize}
\item[(S1)] Fix the row subspace $\bV$ and power allocation $\bLam$, optimize the column subspace $\bU$,
\item[(S2)] Fix the column subspace $\bU$ and power allocation $\bLam$, optimize the row subspace $\bV$,
\item[(S3)] Fix the row subspace $\bV$ and column subspace $\bU$, optimize the power allocation $\bLam$.
\end{itemize}

\noindent A formal description of the alternating minimization is provided in Algorithm \ref{alg1}.
Since (S1), (S2), and (S3) are all convex, we can obtain the global optima of each subproblem.
Moreover, due to the covexity, the {objective} function $\| \by - \cA(\cF,\cW,\bU \bLam \bV^*) \|_2^2$ converges over the iterations.

Now, the optimal solutions of the subproblems are of interest. We limit our discussion to the column subspace optimization problem (S1),
bearing in mind that the same applies to the row subspace optimization (S2).
For simplicity, we omit the iteration index $\ell$ attached to the variables throughout this subsection.
The following lemma provides the solution to (S1).

\begin{algorithm} [t]
\caption{Sparse subspace decomposition (SSD) for millimeter wave channel estimation}
\label{alg1}
\begin{algorithmic} [1]
\STATE Input: Codebooks $\cF$ and $\cW$, and subspace samples $\by$.
\STATE Initialization: Set iteration number $\ell=0$. $\bU_{(0)}$, $\bV_{(0)}$, and $\bLam_{(0)}$ are arbitrary.
\REPEAT
\STATE Update the column subspace of $\bH$:
\beq
\! \! \! \! \! \! \! \! \! \! \! \! \! \! \! \! \! \! \! \! \! \! \! \! \!    \text{(S1)}
\begin{cases}
 \bU_{(\ell+1)} \! = \! & \d4 \argmin \limits_ {\bU} \big\| \by  \!-\! \cA(\cF,\cW,\bU \bLam_{(\ell)} \bV_{(\ell)}^*) \big\|_2^2\\
&\! \! \! \! \! \! \text{subject to~}  \tr(\bU^* \bU) \le d,
\label{SVD method U}
\end{cases}
\eeq

\STATE Update the row subspace of $\bH$:
\beq
\! \! \! \! \! \! \! \!  \! \! \! \! \! \! \! \! \! \!  \text{(S2)}
\begin{cases}
 \bV_{(\ell+1)} \! = \! &\! \! \! \! \! \! \argmin \limits_ {\bV} \lA \by \!-\! \cA(\cF,\cW,\bU_{(\ell+1)} \bLam_{(\ell)} \bV^*) \rA_2^2\\
&\! \! \! \! \! \! \text{subject to~}  \tr(\bV^* \bV) \le d,
\label{SVD method V}
\end{cases}
\eeq

\STATE Update the power allocation to the subspaces $\bLam$:
\beq
\! \! \! \! \! \! \! \!  \! \! \! \!  \! \!  \!  \! \!\text{(S3)~}
\! \bLam_{(\ell+1)}  \! = \! \argmin \limits_ {\bLam}  \! \big \| \by  \!-\! \cA(\cF,\cW,\bU_{(\ell+1)} \bLam \bV_{(\ell+1)}^*) \big\|_2^2,
\label{SVD method lamd}
\eeq
\STATE$\ell  =  \ell + 1,~\text{and}~{\widehat{\bH}}_{(\ell)} =\bU_{(\ell)} \bLam_{(\ell)} \bV_{(\ell)}^*$ \label{update}
\UNTIL{$\ell$ exceeds a maximum number of iterations or the iterations stagnate.}
\STATE Output: $\widehat{\bH} =\widehat{\bH}_{(\ell)}$.
\end{algorithmic}
\end{algorithm}
%
%

\begin{lemma}\label{t1}
Suppose the following quadratic programming,
\beq
\begin{aligned}
\! \widehat{\bU} \= \argmin \limits_ {\bU} \! \lA \by \- \cA(\cF,\cW,\bU \bLam \bV^*) \rA_2^2 ~ \text{s.t.~}  \tr(\bU^* \bU) \!\leq d,
\label{SVD method U omit}
\end{aligned}
\eeq
where $\by\in\C^{K \! N_\mathrm{RF} \times 1}$, $\bU\in \C^{N_\mathrm{r} \times d}$, $\bV\in \C^{N_\mathrm{t} \times d}$, and $\bLam \in \C^{d \times d}$.
Let $\bB$ be
\beq
\bB \! = \! [((\bLam \bV^* \bff_1)^T \otimes  \bW_1^*)^T,\ldots,((\bLam \bV^* \bff_K)^T \otimes  \bW_K^*)^T]^T \label{matrix B}
\eeq
where $\bB \in \C^{(K \! N_\mathrm{RF}) \times (dN_\mathrm{r})}$.
Then, the optimal solution $\widehat{\bU}$ is given by either
$ \vec(\widehat{\bU}) = (\bB^*\bB)^{-1} \bB^* \by$ such that $ \| \vec(  \widehat{\bU} ) \|_2^2  \leq d $
or
$\vec(\widehat{\bU}) \= (\bB^*\bB \+ \lambda \bI_{d N_\mathrm{r}})^{-1} \bB^* \by$ such that  $\ g(\lambda) \! \triangleq \! \| \vec(  \widehat{\bU} ) \|_2^2  \= d$,
where $\lambda > 0$ is the unique solution to the fixed point equation $g(\lambda) \= d$,
in which $g(\lambda)$ is monotonically decreasing in $\lambda>0$\footnote{The $\lambda > 0$  that satisfies $\| \vec({\widehat{\bU}} ) \|_2^2=d$ can be tractably found by using a bisection method.}.
\end{lemma}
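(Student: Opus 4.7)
The plan is to reduce the matrix optimization to a standard quadratically constrained least squares problem via vectorization, and then apply KKT conditions to obtain the two cases. First, I would exploit the Kronecker-product identity $\vec(\bA \bX \bC) = (\bC^T \otimes \bA)\vec(\bX)$ applied to each summand $\bW_k^* \bU (\bLam \bV^* \bff_k)$: writing $\ba_k = \bLam \bV^* \bff_k \in \C^{d \times 1}$, we get $\bW_k^* \bU \ba_k = (\ba_k^T \otimes \bW_k^*)\vec(\bU)$. Stacking over $k=1,\ldots,K$ and comparing with \eqref{affine map} shows that $\cA(\cF,\cW,\bU\bLam\bV^*) = \bB \vec(\bU)$, where $\bB$ is exactly the matrix defined in \eqref{matrix B}. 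Letting $\bu = \vec(\bU)$ and noting that $\tr(\bU^* \bU) = \|\bu\|_2^2$, the optimization in \eqref{SVD method U omit} becomes
\begin{equation*}
\widehat{\bu} = \argmin_{\bu} \, \|\by - \bB \bu\|_2^2 \quad \text{subject to} \quad \|\bu\|_2^2 \leq d,
\end{equation*}
which is a convex QCQP (trust-region subproblem).

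Next I would write the Lagrangian $L(\bu,\lambda) = \|\by - \bB\bu\|_2^2 + \lambda(\|\bu\|_2^2 - d)$ and invoke KKT conditions. Stationarity yields $(\bB^*\bB + \lambda \bI_{dN_\mathrm{r}})\,\bu = \bB^*\by$, while complementary slackness $\lambda(\|\bu\|_2^2 - d) = 0$ with $\lambda \geq 0$ splits the analysis into two cases. If the unconstrained least squares solution $\bu_\star = (\bB^*\bB)^{-1}\bB^*\by$ satisfies $\|\bu_\star\|_2^2 \leq d$, then $\lambda = 0$ is admissible and $\widehat{\bu} = \bu_\star$, giving the first branch of the lemma. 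Otherwise the constraint must be active, $\|\widehat{\bu}\|_2^2 = d$, and $\widehat{\bu} = (\bB^*\bB + \lambda \bI)^{-1}\bB^*\by$ for some $\lambda > 0$.

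The main obstacle, and the step I would handle with care, is establishing that the fixed-point equation $g(\lambda) = d$ has a unique positive root. For this I would diagonalize via the thin SVD $\bB = \bU_B \bSig_B \bV_B^*$, setting $\bz = \bU_B^* \by$ and $\sigma_i$ the singular values of $\bB$, to obtain the closed form
\begin{equation*}
g(\lambda) = \|(\bB^*\bB + \lambda \bI)^{-1}\bB^*\by\|_2^2 = \sum_{i} \frac{\sigma_i^2 |z_i|^2}{(\sigma_i^2 + \lambda)^2}.
\end{equation*}
Term-by-term differentiation shows $g'(\lambda) < 0$ for $\lambda > 0$, so $g$ is strictly decreasing; moreover $g(\lambda) \to \|\bu_\star\|_2^2 > d$ as $\lambda \to 0^+$ (by hypothesis of Case 2) and $g(\lambda) \to 0$ as $\lambda \to \infty$. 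The intermediate value theorem then delivers a unique positive $\lambda$ solving $g(\lambda) = d$, which in turn can be located efficiently by bisection as remarked in the footnote. Combining the two cases yields the statement of the lemma.
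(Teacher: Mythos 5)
Your proof follows essentially the same route as the paper's: vectorize via the Kronecker identity so that the objective becomes $\lA \by - \bB\vec(\bU)\rA_2^2$ with constraint $\lA\vec(\bU)\rA_2^2 \le d$, then apply the KKT conditions and split into the two cases dictated by complementary slackness. You in fact go slightly further than the paper by proving the strict monotonicity of $g(\lambda)$ through the SVD of $\bB$ (the paper merely asserts it), while the one detail you gloss over that the paper makes explicit is the invertibility of $\bB^*\bB$ in the $\lambda=0$ branch, which follows from $K N_\mathrm{RF} \ge d(N_\mathrm{r}+N_\mathrm{t}-d) \ge d N_\mathrm{r}$.
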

\begin{proof}
Given the affine map $\cA$ in \eqref{affine map}, using the Kronecker product equality, $\vec( \bW_k^* \bU \bLam \bV^* \bff_k) = ( (\bLam \bV^* \bff_k)^T \otimes  \bW_k^* ) \vec(\bU)$, for $k=1, \ldots, K$ and collecting them as a column vector yield an equivalent problem to \eqref{SVD method U omit} as
\beq
\widehat{\bU} \! = \! \argmin \limits_ {\bU} \lA \by \! - \! \bB\vec(\bU) \rA_2^2 ~\text{subject to~} \lA \vec( {\bU} ) \rA_2^2  \! \le \! d,
\label{SVD method U new}
\eeq
where $\bB$ is defined in \eqref{matrix B}. This problem is convex and applying KKT condition \cite{boyd} for \eqref{SVD method U new} gives
\beq
\begin{cases}
\bB^* \bB \vec({\bU})-\bB^*\by \+ \lam \vec({\bU}) \= 0, \ \text{(\emph{first-order condition})} \nonumber \\
\lam \ge 0,\ \text{(\emph{dual constraint})} \\
\lam [\lA \vec( {{\bU}} ) \rA_2^2 -d] = 0, \ \text{(\emph{complementary slackness})} \label{KKT}

\end{cases}
\eeq
where $\lam$ is the Lagrange multiplier for the constraint in \eqref{SVD method U new}.
By the \emph{complementary slackness}, the optimal solution is found by considering two cases, i.e.,  $\lam=0$ and $\lA \vec( {\bU} ) \rA_2^2 - d = 0$.

When $\lam \! = \! 0$, we have  $\vec(\widehat{\bU}) \!= \!(\bB^* \bB)^{-1} \bB^* \by$ with $\| \vec( {\widehat{\bU}} ) \|_2^2 \!\le\! d$.
According to \eqref{minimal samples}, $K \!  N_\mathrm{RF} \! \ge \! {d(N_\mathrm{r}+N_\mathrm{t}-d)} \! \ge \! dN_\mathrm{r}$, hence, $\bB^* \bB$ is invertible.

When $\lam \!>\! 0$, $\lA \vec( {\bU} ) \rA_2^2 \!=\!d$ holds and the optimal $\widehat{\bU}$ should meet the two conditions
$\vec(\bU)=(\bB^* \bB + \lam \bI)^{-1} \bB^* \by,~\lam > 0$ and $\lA \vec( {\bU} ) \rA_2^2=d$.
This concludes the proof.\qed
\end{proof}

We turn our attention to the subspace power allocation problem (S3).

\begin{lemma}
Suppose the subspace power allocation problem
\beq
\widehat{\bLam} = \argmin \limits_ {\bLam} \lA \by - \cA(\cF,\cW,\bU \bLam \bV^*) \rA^2_2,
\label{SVD method lamd omit}
\eeq
where $\bLam \! \in \! \C^{d \times d}$ is the diagonal matrix, $\by \! \in \! \C^{K \!N_\mathrm{RF} \times 1}$, $\bU \! \in \! \C^{N_\mathrm{r} \times d}$, and $\bV \! \in \! \C^{N_\mathrm{t} \times d}$.  Then the optimal solution to \eqref{SVD method lamd omit} is
$$ \diag(\widehat{\bLam}) = (\bP^* \bP)^{-1}\bP^* \by,  $$
where $\bP \! \in \! \C^{K \! N_\mathrm{RF} \times d}$, and $[\bP]_j \!= \! \cA(\cF,\cW,[\bU]_j [\bV]_j^*),~\forall j$.
\end{lemma}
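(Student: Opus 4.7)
The plan is to exploit two structural facts: (i) $\bLam$ is diagonal, so $\bU\bLam\bV^*$ is a linear combination of rank-one outer products with the diagonal entries of $\bLam$ as the coefficients, and (ii) the map $\cA(\cF,\cW,\cdot)$ is linear in its channel argument. Together these reduce (S3) to a standard unconstrained linear least-squares problem in the $d$ unknowns $\diag(\bLam)$, whose normal equations immediately yield the stated closed form.

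First I would write $\bU\bLam\bV^* = \sum_{j=1}^d \lam_j [\bU]_j [\bV]_j^*$, where $\lam_j = [\bLam]_{j,j}$. Using linearity of $\cA(\cF,\cW,\cdot)$ in the third argument (which is immediate from the definition \eqref{affine map}), I obtain
\begin{equation*}
\cA(\cF,\cW, \bU\bLam\bV^*) \;=\; \sum_{j=1}^d \lam_j \, \cA(\cF,\cW,[\bU]_j[\bV]_j^*) \;=\; \bP \, \diag(\bLam),
\end{equation*}
with $[\bP]_j = \cA(\cF,\cW,[\bU]_j[\bV]_j^*)$ as defined in the statement. Substituting into \eqref{SVD method lamd omit}, the subproblem becomes
\begin{equation*}
\diag(\widehat{\bLam}) \;=\; \argmin_{\bx \in \C^{d}} \, \lA \by - \bP \bx \rA_2^2,
\end{equation*}
which is an unconstrained convex quadratic in $\bx = \diag(\bLam)$.

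Next I would set the gradient to zero to derive the normal equations $\bP^*\bP \, \diag(\bLam) = \bP^* \by$, giving $\diag(\widehat{\bLam}) = (\bP^*\bP)^{-1} \bP^* \by$ whenever $\bP^*\bP \in \C^{d \times d}$ is invertible. The main (and only non-routine) obstacle is justifying this invertibility. I would argue as in Lemma~\ref{t1}: the channel-use budget \eqref{minimal samples} gives $K N_\mathrm{RF} \ge d(N_\mathrm{r}+N_\mathrm{t}-d) \ge d$, so $\bP$ is a tall matrix, and generic choices of the sounding codebooks $\cF,\cW$ together with the current iterates $\bU,\bV$ (which have orthogonal/linearly independent columns by construction in (S1)--(S2)) make the columns $\{\cA(\cF,\cW,[\bU]_j[\bV]_j^*)\}_{j=1}^d$ linearly independent, ensuring $\bP$ has full column rank. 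This yields the claimed closed form and completes the proof.
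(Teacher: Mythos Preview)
Your proposal is correct and follows essentially the same approach as the paper: both use the diagonal structure of $\bLam$ together with the linearity of $\cA(\cF,\cW,\cdot)$ to rewrite the objective as $\lA \by - \bP\,\diag(\bLam)\rA_2^2$ and then solve the resulting least-squares problem via the normal equations. Your argument is in fact slightly more complete, since you address the invertibility of $\bP^*\bP$, which the paper's proof simply assumes without comment.
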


\begin{proof}
Decompose the affine map $\cA$ in \eqref{affine map}  as
\beq
\textstyle\d4 \cA(\cF,\cW, \bU \bLam  \bV^*) \d4&=&\d4 \textstyle \cA(\cF,\cW, \sum\limits_{j=1}^d [\bU]_j [\bLam]_{j,j}  [\bV]_j^*) \nn \\
                \d4&=&\d4 \textstyle  \sum\limits_{j=1}^d  [\bLam]_{j,j} \cA(\cF,\cW,[\bU]_j [\bV]_j^*),\label{linear operator}
\eeq
where $[\bLam]_{j,j}$ in \eqref{linear operator} is the power allocation to the corresponding subspace $[\bU]_j [\bV]_j^*$.  Using \eqref{linear operator}, the problem \eqref{SVD method lamd omit} can be rewritten as
\vspace{-0.2cm}
\beq
\textstyle
\widehat{\bLam} = \argmin \limits_ {\bLam} \big\| \by - \sum \limits_ {j=1}^d [\bLam]_{j,j} \cA(\cF,\cW,[\bU]_j [\bV]_j^*) \big\|_2^2.
\label{SVD method lamd new}
\eeq
\vspace{-0.1cm}
Solving the least square problem in \eqref{SVD method lamd new} results in
\beq
\diag(\widehat{\bLam}) = (\bP^* \bP)^{-1}\bP^* \by \label{SVD method lamd closed 2}, \nonumber
\eeq
where $\bP \! \in \! \C^{K \! N_\mathrm{RF} \times d}$, and $[\bP]_j \!= \! \cA(\cF,\cW,[\bU]_j [\bV]_j^*),~j=1,\cdots,d$. This concludes the proof. \qed
\end{proof}


\begin{algorithm} [t]
\caption{Sparse subspace decomposition with thresholding (SSD-T) for millimeter wave channel estimation}
\label{alg2}
\begin{algorithmic} [1]
\STATE Input: Codebooks $\cF$ and $\cW$, subspace samples $\by$, and noise variance $\sig^2$.
\STATE Initialization: Set iteration number $\ell=0$. $\bU_{(0)}$, $\bV_{(0)}$ and $\bLam_{(0)}$ are arbitrary.
\REPEAT
\STATE Get $\bH_{(\ell)}$ from Algorithm 1,
\STATE  $\ell  =  \ell + 1$,
\UNTIL{$\ell$ exceeds a maximum number of iterations, the iterations stagnate, or \eqref{2a} is satisfied.}
\STATE Output: $\widehat{\bH} =\widehat{\bH}_{(\ell)}$.
\end{algorithmic}
\end{algorithm}

\begin{figure}[t]
\centering
\includegraphics[width=2.7in]{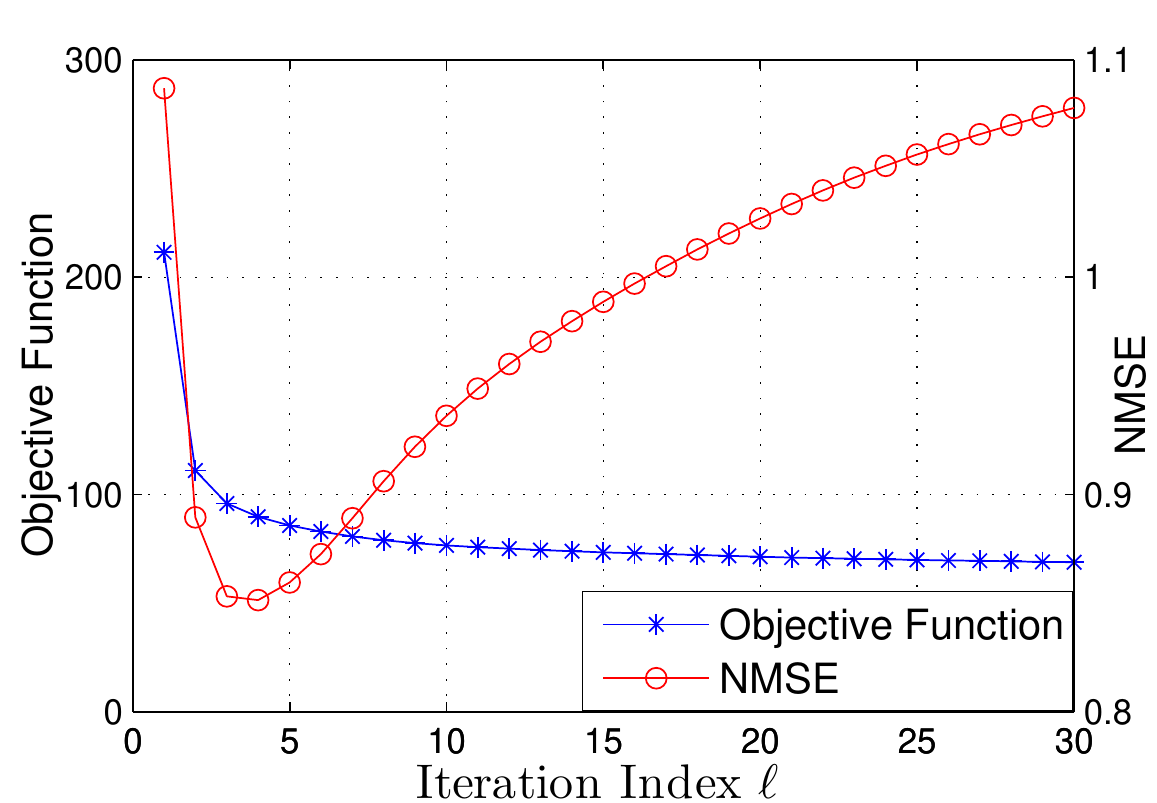}
\caption{Objective function of \eqref{SVD method_relax} and NMSE vs. iteration $\ell$} \label{Fig2}
($N_\mathrm{t}=64, N_\mathrm{r}=16, L=2, K=100, N_\mathrm{RF} =4, \text{SNR} \! = \!0$~dB).
\end{figure}

\subsection{\mbox{\!Sparse Subspace\! Decomposition with\! Thresholding \!(SSD-T)}} \label{section4B}
In our channel estimation problem, a normalized mean square error (NMSE), i.e.,
\beq
\textstyle
\text{NMSE} =  \mathbb{E} \Big[{\big\| \bH - \widehat{\bH} \big\|^2_F}/ {\lA \bH \rA^2_F}\Big]. \label{metric}
\eeq
is evaluated to confirm the accuracy of an estimator.
Obviously, there is a gap between the objective function in \eqref{SVD method_relax} and NMSE in \eqref{metric}.
Though our objective function monotonically decreases per iteration to a local minimum point, the NMSE in \eqref{metric} would exhibit a non-monotonic behavior.
This is because the objective in \eqref{SVD method_relax} contains the disturbance, i.e., the noise $\bn$.
The decoupling between the objective and NMSE becomes critical in a low SNR, especially when $\| \by \|_2^2 \approx \| \tilde{\bn} \|_2^2$, where $\tilde{\bn} = [(\bW_1^* \bn_1)^T , \cdots, (\bW_K^* \bn_K)^T]^T$.
In this case, the objective function is approximated as
$\| \tilde{\bn} \- \cA(\cF,\cW,\widehat{\bH}) \|_2^2$.
Minimizing $\| \tilde{\bn} \- \cA(\cF,\cW,\widehat{\bH}) \|_2^2$ results in large deviation from the true $\bH$.

To illustrate this in a low SNR scenario, we display the curves of the objective function of \eqref{SVD method_relax} and NMSE of the proposed SSD (i.e., Algorithm \ref{alg1}) across iteration $\ell$ in Fig. \ref{Fig2} with $0$ dB SNR.
As expected, the objective function of \eqref{SVD method_relax} decreases monotonically.
However, NMSE rather increases after a few iterations because the algorithm finds $\widehat{\bH}$ close to the noise, resulting in  the {\emph{over-matching}}.

One remedy is to terminate the iteration rather than making $\| \by - \cA(\cF,\cW,\widehat{\bH}) \|_2^2$ as small as possible.
Specifically, provided the output of Algorithm \ref{alg1} at iteration $\ell$, i.e., $\widehat{\bH}_{(\ell)}$, the stopping condition can be given by
\beq
\textstyle
{\big\| \by - \cA(\cF,\cW, \widehat{\bH}_{(\ell)}) \big\|_2^2}\big/(K\! N_\mathrm{RF}) < \sig^2. \label{2a}
\eeq
The underlying intuition of \eqref{2a} is that by the strong law of large number the left hand side (l.h.s.) of \eqref{2a} will converge to the variance of noise, i.e., $\sig^2$, as $K$ grows, given that the output of Algorithm 1 is close to the true $\bH$. The inequality in \eqref{2a} signifies that we treat the case when the l.h.s. of \eqref{2a} is smaller than $\sig^2$ as the \emph{over-matching} and stop the iteration.
The formal description of the SSD combined with the thresholding in \eqref{2a},
namely SSD with thresholding (SSD-T),
is provided in Algorithm \ref{alg2}.
By reducing the number of iterations, Algorithm \ref{alg2} will  decrease the computational complexity while enhancing the performance in a low SNR.

\section{Numerical Simulations} \label{section5}

In this section, we numerically evaluate the NMSE of the proposed SSD and SSD-T algorithms and compare them with the matrix factorization (MF) \cite{Haldar_factor} and nuclear norm minimization (NNM) \cite{chen2015exact} techniques.
Also, we benchmark the computational overheads of SSD, SSD-T, MF, and NNM.
The numerical simulation setting is first discussed.

(1) \emph{Channel model}:
We assume the prevalent physical channel representation  that models sparse millimeter wave MIMO channels \cite{RobertOver, brady}.
Assume there are $L$  propagation paths between the transmitter and receiver, where the channel is modeled via
\beq
\bH = \sqrt{\frac{N_\mathrm{r} N_\mathrm{t}}{L}} \sum\limits_{l = 1}^L \al_l\ba_\mathrm{r}({\phi_{\mathrm{r},l}}) \ba^*_\mathrm{t} ({\phi_{\mathrm{t},l}}). \nonumber
\eeq
The $\al_{l}$ is the complex gain of the $l$th path, i.i.d. as $\al_{l}\sim \cC\cN(0, 1)$, $\phi_{\mathrm{t},l}$ and $\phi_{\mathrm{r},l}$ are angles of departure and arrival at the transmitter and receiver, respectively.
The $\ba_\mathrm{t}(\phi_{\mathrm{t},l})$ and {$\ba_\mathrm{r}(\phi_{\mathrm{r},l})$} are the uniform linear array (ULA) response vectors at the transmitter and receiver, respectively, where the inter-element spacing of ULA is set to half of the wavelength.
The maximum number of iteration of the proposed SSD (i.e., Algorithm \ref{alg1}) is 30.
We further assume $L\=2$ paths, $N_\mathrm{RF} \=4$ RF chains, and the number of data stream $N_\mathrm{d}\=L$. We assume the rank upper bound $d\=3$.

(2) \emph{Performance Evaluation}:
The NMSE statistics across different SNRs and channel uses ($K$) are evaluated.
Each curve is obtained after averaging over $100$ channel realizations.

\begin{table}[t]
\caption{Execution Time (in seconds) for Algorithms.}
\centering
\begin{tabular}{c c | c c c c c}
\hline
$N_\mathrm{r} \times N_\mathrm{t}$ & Channel \\Uses ($K$) & MF & NNM & SSD & SSD-T \\
\hline
$16\times 64$ & 100 & 5.23 & 228.83 & 15.09 & 2.41  \\
$16\times 64$ & 150 & 10.60 & 483.09 & 25.05 & 3.78\\
$8\times 16$  & 20 & 0.32  & 3.47 & 1.19 & 0.17\\
$8\times 16$  & 60 & 1.20    & 12.60  &3.91 & 0.57\\ \hline
\end{tabular} \label{table1}
\end{table}

Table \ref{table1} shows the execution time statistics (in seconds on a $3.4$ GHz CPU) for $8 \times 16$ and $16 \times 64$ channel dimensions when SNR = 10 dB.
It can be seen from Table \ref{table1} that both proposed algorithms exhibit lower complexity.
The SSD-T shows the fastest execution while the NNM becomes substantially slower, as the antenna dimension grows.

Fig. \ref{Fig3} illustrates the NMSE curves of the proposed SSD and SSD-T compared with MF and NNM for $16\times 64$ channel and $K\=100$ channel uses.
It can be seen that both SSD and SSD-T outperform the MF across a broad range of SNRs. NNM achieves robust performance at the expense of the huge computational overhead.
On the other hand, Fig. \ref{Fig3} reveals that SSD-T achieves similar performance as NNM in the low SNR regime, while it starts to outperform NNM in the high SNR, verifying that SSD-T has the ability to resolve the complexity and accuracy tradeoffs.

Fig. \ref{Fig4} presents NMSE curves at $10$ dB and $30$ dB SNRs across different channel uses.
Fig. \ref{Fig4} clearly shows that both SSD and SSD-T achieve similar performances with NNM at $10$ dB SNR, while keeping the complexity remarkably low.
At $30$ dB SNR, both SSD and SSD-T outperform NNM.
The gap of the proposed approaches, compared with MF, is clear in Fig. \ref{Fig4}.
Evidently, increasing $K$ in the high SNR has the effect of enhancing the accuracy of SSD and SSD-T, while only marginal improvement is observed for NNM.

\begin{figure}[t]
\centering
\includegraphics[width=3in]{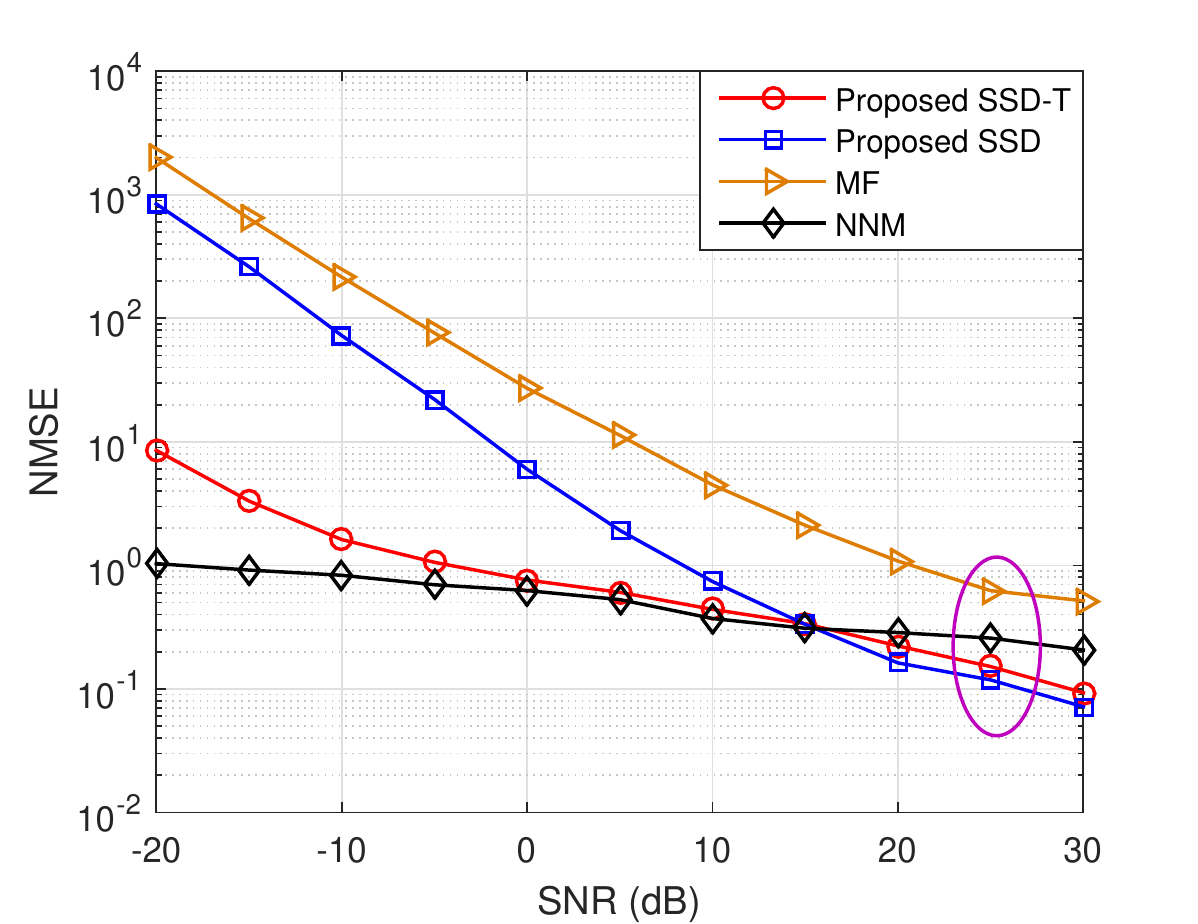}
\caption{NMSE vs. SNR (dB) ($N_\mathrm{t}=64, N_\mathrm{r}=16, L=2, K=100, N_\mathrm{RF} =4$).} \label{Fig3}
\end{figure}

\section{Conclusions} \label{section6}
In this paper, we investigated the estimation of sparse millimeter wave MIMO channels   using hybrid precoding systems.
Based on the equivalency of optimizing the unitary precoder/combiner of the sparse channel matrix and approximating the channel via a sparse subspace decomposition, we proposed the SSD algorithm.
The SSD consists of three constituent estimators: one for the row subspace, one for the column subspace, and the other for the subspace power allocation.
To avoid  unexpected over-matching issues, we devised the SSD-T algorithm.
Our simulations showed that SDD and SDD-T achieve similar or better accuracy than the best performing existing sparse matrix estimation benchmarks (e.g., NNM algorithm) with remarkably low overhead.


\begin{figure}[t]
\centering
\includegraphics[width=3 in]{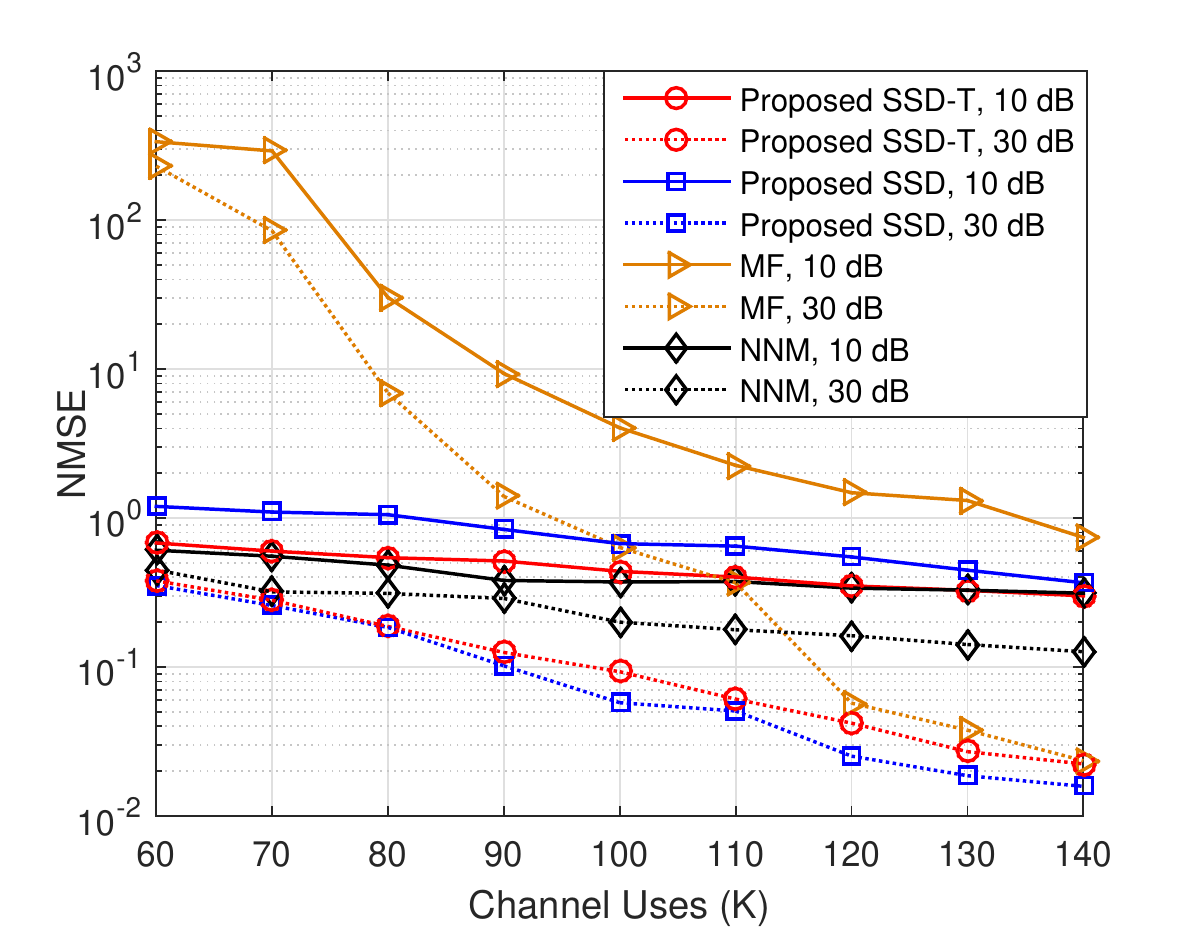}
\caption{NMSE vs. Channel Uses ($N_\mathrm{t}=64, N_\mathrm{r}=16, L=2, \text{SNR =10~dB and 30~dB}, N_\mathrm{RF} =4$).} \label{Fig4}
\end{figure}



\bibliographystyle{IEEEtran}
\bibliography{IEEEabrv,Conference_mmWave_CS}
\end{spacing}

\end{document}